\documentclass[journal]{IEEEtran}

\usepackage{amsmath,amssymb}
\usepackage{bm}
\usepackage{booktabs}
\usepackage{multirow}
\usepackage{graphicx}
\usepackage{cite}
\usepackage{url}
\usepackage{array}
\usepackage{tabularx}
\usepackage{siunitx}
\usepackage{setspace}
\usepackage{algorithm}
\usepackage{algpseudocode}
\newtheorem{theorem}{Theorem}
\newtheorem{proposition}{Proposition}
\newtheorem{assumption}{Assumption}

\DeclareMathOperator*{\argmin}{arg\,min}

\begin{document}
\setstretch{0.95}
\title{Frequency-Calibrated Decision-Focused Learning for BESS Dispatch Under Renewable Uncertainty}

\author{
Zihan~Meng,~\IEEEmembership{Student~Member,~IEEE,}
Chenxi~Hu,~\IEEEmembership{Member,~IEEE,}
Xin~Cui,~\IEEEmembership{Member,~IEEE,}
and~Francis~Yunhe~Hou,~\IEEEmembership{Fellow,~IEEE}%
\thanks{Z. Meng, X. Cui, and F. Y. Hou are with the Department of Electrical \& Computer Engineering, The University of Hong Kong, Hong Kong.}%
\thanks{C. Hu is with the Department of Computer Science, The Hang Seng University of Hong Kong, Hong Kong.}
}

\maketitle


\begin{abstract}

Battery energy storage system (BESS) flexibility depends on state of charge and intertemporal operating constraints. Renewable forecast errors of similar magnitudes can lead to different BESS dispatch results. An error may alter early charging or discharging decisions and reduce the flexibility available for subsequent regulation, while another may have little effect on the resulting schedule. To account for these unequal consequences, this paper proposes Frequency-Calibrated Decision-Focused Learning (FC-DFL), which trains renewable predictors using the realized operating costs of their induced BESS schedules. 
Gaussian smoothing and solver-based zeroth-order estimation provide training feedback without requiring differentiating through the non-differentiable mixed-integer BESS dispatch problem. The same decision feedback is further combined with prediction sensitivity to develop a decision-oriented XAI method that identifies the input features and forecast periods most consequential to BESS operation. Case studies using German wind and solar data show that FC-DFL reduces the integrated absolute frequency-deviation metric by 6.05\% and decision regret by 23.92\% relative to accuracy-based predict-then-optimize method. The XAI analysis further demonstrates that the identified high-impact forecast periods produce 2.84 times larger average absolute changes in realized operating cost than low-impact periods. These findings highlight the importance of incorporating downstream decision-making outcomes into renewable forecasting and demonstrate the value of FC-DFL in improving BESS regulation and providing operationally meaningful forecast explanations.

\end{abstract}

\begin{IEEEkeywords}
Battery energy storage system, decision-focused learning, explainable
artificial intelligence, frequency regulation, renewable forecasting,
renewable uncertainty.
\end{IEEEkeywords}


\section{Introduction}
\label{sec:introduction}

\IEEEPARstart{T}{he} increasing integration of wind and solar
generation introduces greater variability and uncertainty into
power-system operation. Renewable ramps change balancing
requirements over different time scales, and forecast errors
cause deviations between scheduled and realized renewable
generation
\cite{Ahmed2019ForecastingApps,Cui2017Ramping,Wang2018TemporalCorrelation}.
These errors may also be correlated across consecutive
intervals, affecting operating decisions over an extended
horizon. Accurate renewable forecasting is therefore
important for generation scheduling, reserve allocation,
and energy storage operation. However, forecast uncertainty
cannot be eliminated, and its operating consequences depend
on how the resulting predictions are used.

Renewable forecasting methods are commonly trained and
evaluated using statistical measures such as mean squared
error (MSE), root mean squared error (RMSE), and mean
absolute error (MAE). These metrics measure forecast
accuracy but do not directly quantify its value in
downstream operation
\cite{Ahmed2019ForecastingApps,Gandhi2024ForecastValue}.
Pinson et al.~\cite{Pinson2007WindTrading} demonstrated
how probabilistic wind power forecasts can improve trading
decisions by accounting for forecast uncertainty.
Xie et al.~\cite{Xie2014SpatioTemporalDispatch}
integrated spatiotemporal wind forecasts into robust
look-ahead dispatch and quantified the resulting operating
cost benefits. More broadly, Bertsimas and
Kallus~\cite{Bertsimas2020Prescriptive} developed a
data-driven framework for making decisions from contextual
information rather than treating prediction as an isolated
task. These studies establish the importance of evaluating
forecasts through their applications. For a constrained
energy storage resource, however, the value of a forecast
also depends on the physical operating conditions under
which its predicted trajectory is used.

Battery energy storage systems (BESSs) provide flexibility
through rapid charging and discharging and can support
power balancing and frequency regulation
\cite{Rouholamini2022BESSReview}.
Practical applications have demonstrated these capabilities.
For example, the Hornsdale Power Reserve has provided
frequency-control ancillary services in the Australian
electricity market \cite{AEMO2018Hornsdale}.
He et al.~\cite{He2017WindBatteryRegulation} developed
a coordinated wind--battery regulation strategy and
examined its participation in energy and regulation
markets. Xu et al.~\cite{Xu2018OptimalBatteryRegulation}
further investigated battery control and bidding for
frequency regulation, considering operating performance
and battery aging. These studies highlight the importance
of coordinating battery operation with regulation
requirements. The available BESS response also depends
on state of charge (SOC), charging and discharging power,
ramp limits, and the energy required in subsequent
intervals \cite{Liu2019RampESS}.

The same operating constraints also affect the
consequences of renewable forecast errors.
BESS dispatch is determined from a renewable forecast,
while its realized regulation performance depends on
the renewable generation that actually occurs.
Forecasts with similar error magnitudes can induce
different charging and discharging schedules and
SOC trajectories, leading to different realized costs.
An earlier BESS decision changes the stored energy
available in subsequent intervals, while power and
ramp limits further restrict the feasible dispatch
trajectory. Pointwise forecasting losses do not capture
these intertemporal operating consequences, motivating
the use of realized decision cost in predictor training.

Decision-focused learning (DFL) offers a means of
aligning prediction with optimization.
Task-based learning incorporates downstream objectives
into model training \cite{Donti2017TaskBased},
and subsequent approaches have developed methods
for learning from optimization decisions
\cite{Wilder2019DFL}.
Smart Predict-then-Optimize (SPO) formalizes
decision-based prediction losses for problems
with uncertain objective coefficients
\cite{Elmachtoub2022SPO}.
Related developments have also addressed uncertainty
beyond the objective. Estes and Richard
\cite{Estes2023RHS} studied SPO for two-stage
linear programs with uncertain right-hand-side
parameters, and Mandi et al.~\cite{Mandi2025ConstraintDFL}
examined feasibility and decision quality when
optimization constraints contain predicted parameters.
These advances demonstrate that DFL can address
different forms of optimization uncertainty.
Its application nevertheless requires a training
signal consistent with the physical role of the
predicted quantities and the downstream optimizer.

Recent research has also incorporated DFL into
energy storage applications. Yi et al.
\cite{Yi2025PerturbedStorageDFL} developed a
perturbation-based DFL framework for linear
storage models, with applications to energy
arbitrage and storage behavior prediction.
Kim et al.~\cite{Kim2025NeuralOptimizer}
proposed a neural optimizer that integrates
forecasting and BESS control for electricity
cost and peak-demand management.
These studies demonstrate the potential of
decision-focused methods in storage operation.
For BESS regulation under renewable uncertainty,
renewable generation enters the native power-balance
constraint as a physical injection, while BESS
dispatch includes binary charging/discharging
decisions and intertemporal SOC constraints.
The canonical SPO+ surrogate assumes uncertain
objective coefficients over a fixed feasible set
and does not directly accommodate renewable
generation entering the native power-balance
right-hand side. Furthermore, gradients obtained
through a continuous storage relaxation do not
directly capture the exact mixed-integer BESS
decisions. The resulting training signal must
therefore reflect the realized cost of forecast-induced
BESS dispatch without requiring derivatives
of the mixed-integer optimizer.

The distinction between prediction sensitivity
and realized BESS operating cost is also important
for model interpretability.
Explainable artificial intelligence (XAI) has
been increasingly studied in power and energy
systems to improve the transparency of
machine-learning models \cite{Machlev2022XAI}.
For solar forecasting, Mitrentsis and Lens
\cite{Mitrentsis2022SolarXAI} used SHAP to
interpret point forecasts and prediction
intervals. Liao et al.~\cite{Liao2024XAIWind}
investigated several XAI techniques for wind
power forecasting and evaluated the reliability
of their feature explanations. These approaches
provide insight into how predictor inputs
affect forecast outputs. For BESS regulation,
however, a feature with high prediction
importance does not necessarily have a large
effect on realized operating cost.
Conversely, a smaller forecast change can
be important when it alters BESS decisions
across constrained operating intervals.
This distinction calls for explanations
that connect prediction sensitivity with
downstream BESS operation.

To address these challenges, this paper proposes
Frequency-Calibrated Decision-Focused Learning
(FC-DFL) for renewable forecasting and BESS
regulation. FC-DFL uses the realized cost of
forecast-induced BESS dispatch to connect
predictor training with decision-oriented
interpretation. The main contributions of
FC-DFL are summarized as follows.

\begin{enumerate}

\item
We develop a decision-focused renewable forecasting
model for BESS regulation. FC-DFL evaluates the BESS
schedules induced by renewable forecasts under
realized renewable generation. The resulting
decision cost captures the effects of forecast
errors on intertemporal BESS operation and the
regulation capability available in subsequent
intervals.

\item
We develop an RHS-compatible decision-focused
training method for mixed-integer BESS dispatch.
FC-DFL employs Gaussian smoothing and solver-based
zeroth-order estimation to obtain decision gradients
from realized operating costs. The smoothed-gradient
identity and conditional unbiasedness of the
estimators are established under explicit assumptions.
The resulting feedback enables predictor training
without differentiating through the BESS optimizer.

\item
We develop a decision-oriented XAI method using
the same time-resolved decision gradient as
FC-DFL training. By combining this feedback with
the predictor Jacobian, the method identifies
input features and forecast periods with greater
importance to realized BESS operation.
Matched perturbation tests using the exact
BESS dispatch model validate the operational
relevance of the importance rankings.

\end{enumerate}

The remainder of this paper is organized as follows.
Section~\ref{sec:physical_model} develops the physical
BESS regulation model and realized decision cost.
Section~\ref{sec:fcdfl} presents FC-DFL training.
Section~\ref{sec:xai} introduces decision-oriented
explainability.
Section~\ref{sec:case_study} reports the case studies.
Section~\ref{sec:conclusion} concludes the paper.

\section{Forecast-Coupled BESS Regulation Model}
\label{sec:physical_model}

Renewable forecast errors affect BESS regulation through the
operating decisions they induce. This section first describes
renewable variability and forecast uncertainty, then formulates
the intertemporal BESS constraints. The resulting model establishes
how renewable forecasts determine BESS dispatch and its realized
operating cost. The overall framework of FC-DFL is illustrated
in Fig.~\ref{fig:overall_framework}.

\begin{figure}[t]
\centering
\includegraphics[width=\columnwidth]
{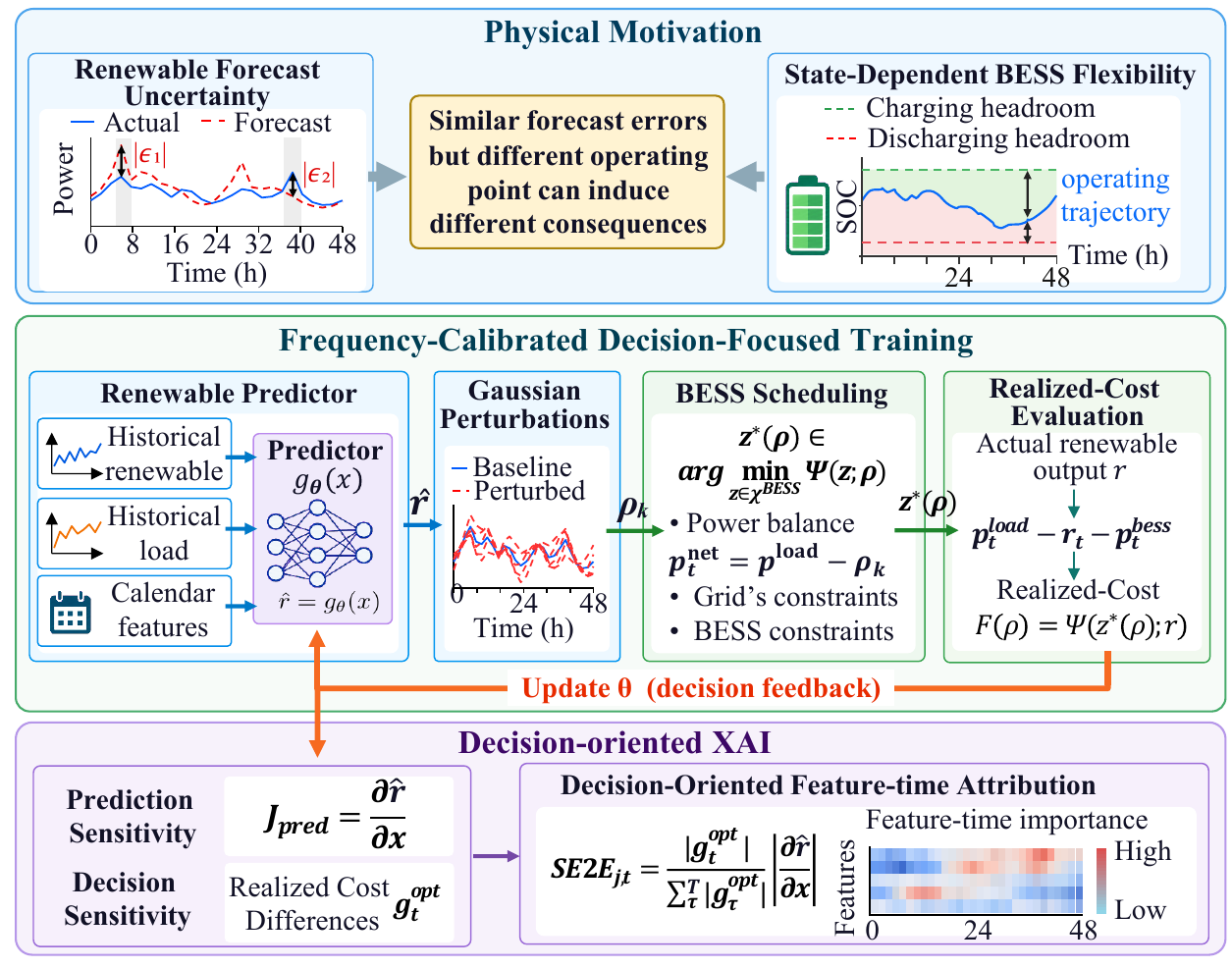}
\caption{Framework of FC-DFL for
renewable forecasting, BESS regulation,
and decision-oriented explainability.}
\label{fig:overall_framework}
\end{figure}


\subsection{Renewable Variability and Forecast Errors}
\label{subsec:renewable_structure}

Let $\mathcal T=\{1,\ldots,T\}$ denote the dispatch
intervals, each of duration $\Delta t$.
Wind and photovoltaic (PV) generation exhibit variations
over different time scales. Renewable ramps are commonly
characterized by their magnitude, rate, duration, and direction
\cite{GallegoCastillo2015RampReview,Florita2013Ramping,Cui2017Ramping}.
For sample $i$, the aggregate renewable generation and its
variation over $\tau$ intervals are expressed as
\begin{subequations}
\label{eq:renewable_structure}
\begin{align}
r_{i,t}
&=
r_{i,t}^{\mathrm{w}}
+
r_{i,t}^{\mathrm{pv}},
\qquad t\in\mathcal T,
\label{eq:renewable_total}
\\
\Delta_{\tau}r_{i,t}
&=
r_{i,t}-r_{i,t-\tau},
\qquad t=\tau+1,\ldots,T.
\label{eq:renewable_difference}
\end{align}
\end{subequations}

The renewable change $\Delta_{\tau}r_{i,t}$ captures both
the magnitude and direction of a ramp over $\tau\Delta t$.
Its average ramp rate follows from the change per unit time.
These characteristics describe the renewable power trajectory,
but they do not determine the regulation effort required from
the BESS. That effort also depends on the available charging
and discharging capability when the variation occurs.

Forecast uncertainty introduces a further difference between
scheduled and realized renewable generation. Following the
forecast-minus-realization convention
\cite{Zhang2018ForecastError}, the forecast errors and
conventional squared-error loss are given by
\begin{subequations}
\label{eq:forecast_error_structure}
\begin{align}
\epsilon_{i,t}
&=
\hat r_{i,t}-r_{i,t},
\label{eq:forecast_error}
\\
\bm\epsilon_i
&=
\hat{\bm r}_i-\bm r_i,
\label{eq:forecast_error_vector}
\\
\epsilon_{i,t,\tau}^{\mathrm{ramp}}
&=
\Delta_{\tau}\hat r_{i,t}
-
\Delta_{\tau}r_{i,t}
=
\epsilon_{i,t}-\epsilon_{i,t-\tau},
\label{eq:ramp_forecast_error}
\\
\mathcal L_i^{\mathrm{MSE}}
&=
\frac{1}{T}
\left\|
\bm\epsilon_i
\right\|_2^2.
\label{eq:mse}
\end{align}
\end{subequations}

The error $\epsilon_{i,t}$ measures the renewable mismatch
at one interval, whereas
$\epsilon_{i,t,\tau}^{\mathrm{ramp}}$ measures the error
in renewable change across two intervals.
MSE aggregates squared forecast errors without accounting
for their temporal ordering. Consequently, forecasts with
the same MSE can distribute their errors differently
across the dispatch horizon. The resulting operating
consequences depend on the BESS state and constraints,
as developed next.


\subsection{BESS Regulation Model}
\label{subsec:bess_model}

BESS regulation is constrained by the available converter
power, stored energy, and intertemporal operating limits.
Let $z_i$ collect the charging and discharging powers,
net BESS power, binary operating modes, and SOC trajectory
over the $T$-interval horizon. Positive
$p_{i,t}^{\mathrm{bess}}$ denotes net discharge to the system.
The rated power $\bar P$ is reduced by the retained
headroom $P^{\mathrm{res}}$, leaving $P^{\mathrm{use}}$
available for the modeled dispatch.

The BESS operating constraints are 
\begin{subequations}
\label{eq:bess_model}
\begin{align}
p_{i,t}^{\mathrm{bess}}
&=
p_{i,t}^{\mathrm{dch}}
-
p_{i,t}^{\mathrm{ch}},
\label{eq:bess_net_power}
\\
P^{\mathrm{use}}
&=
\bar P-P^{\mathrm{res}},
\label{eq:usable_power}
\\
0
&\le
p_{i,t}^{\mathrm{ch}}
\le
u_{i,t}P^{\mathrm{use}},
\label{eq:charge_power}
\\
0
&\le
p_{i,t}^{\mathrm{dch}}
\le
(1-u_{i,t})P^{\mathrm{use}},
\label{eq:discharge_power}
\\
u_{i,t}
&\in
\{0,1\},
\label{eq:bess_mode}
\\
s_{i,t+1}
&=
s_{i,t}
+
\frac{\Delta t}{\bar E}
\left(
\eta^{\mathrm{ch}}p_{i,t}^{\mathrm{ch}}
-
\frac{p_{i,t}^{\mathrm{dch}}}
{\eta^{\mathrm{dch}}}
\right),
\label{eq:energy_dynamics}
\\
\underline s
&\le
s_{i,t}
\le
\bar s,
\quad t\in\mathcal T\cup\{T+1\},
\label{eq:energy_soc_limits}
\\
s_{i,1}
&=
s_{i,T+1}
=
s^{\mathrm{init}},
\label{eq:terminal_energy}
\\
p_{i,0}^{\mathrm{bess}}
&=
p_{i,T+1}^{\mathrm{bess}}
=
0,
\label{eq:boundary_power}
\\
\left|
p_{i,t}^{\mathrm{bess}}
-
p_{i,t-1}^{\mathrm{bess}}
\right|
&\le
R^{\mathrm{bess}}\Delta t,
\quad t\in\mathcal T\cup\{T+1\}.
\label{eq:bess_ramp}
\end{align}
\end{subequations}

The binary operating mode enforces mutually exclusive
charging and discharging. The SOC dynamics in
\eqref{eq:energy_dynamics} account for conversion losses
and couple BESS decisions across consecutive intervals.
The terminal SOC condition prevents artificial energy
depletion over the scheduling horizon, while
\eqref{eq:boundary_power} and \eqref{eq:bess_ramp}
constrain power changes at both scheduling boundaries.

These constraints make the available BESS response
dependent on its operating trajectory. A high SOC provides
more energy for discharging but less capacity to absorb
renewable surplus, while a low SOC has the opposite effect.
Previous BESS power and the remaining scheduling horizon
further restrict admissible actions. Thus, similar renewable
variations can require different corrective schedules
depending on when they occur. The complete set of physical trajectories satisfying
\eqref{eq:bess_model} is denoted by
$\mathcal X_i^{\mathrm{BESS}}$.
This set includes the binary operating modes, SOC dynamics,
power limits, ramp constraints, and terminal condition.
It is fixed with respect to the renewable forecast.
Forecast changes affect the selected BESS trajectory,
rather than the underlying physical constraints.


\subsection{Forecast-Coupled Dispatch and Realized Decision Cost}
\label{subsec:forecast_coupled_model}

Renewable forecasts determine BESS dispatch through the
power-balance requirement. Let $p_{i,t}^{\mathrm{load}}$
denote the exogenous balance requirement presented to
the renewable--BESS regulation layer after accounting
for scheduled non-BESS supply. It remains fixed between
forecast-driven dispatch and realized evaluation, so
renewable generation is the only uncertain quantity
considered in the regulation problem.

For a generic renewable trajectory
$\bm\rho=[\rho_1,\ldots,\rho_T]^\top$,
the residual imbalance and forecast-coupled BESS
optimization problem are formulated as
\begin{subequations}
\label{eq:reduced_bess_problem}
\begin{align}
p_{i,t}^{\mathrm{net}}(\bm\rho)
&=
p_{i,t}^{\mathrm{load}}-\rho_t,
\label{eq:net_power}
\\
e_{i,t}(z_i;\bm\rho)
&=
p_{i,t}^{\mathrm{net}}(\bm\rho)
-
p_{i,t}^{\mathrm{bess}},
\label{eq:residual}
\\
\Delta f_{i,t}
&=
-\frac{
e_{i,t}(z_i;\bm\rho)
}{
B^{\mathrm{sys}}
},
\label{eq:frequency_proxy}
\\
\kappa
&=
1+
\frac{\lambda_f}{B^{\mathrm{sys}}},
\label{eq:kappa}
\\
\Psi_i(z_i;\bm\rho)
&=
\Delta t
\sum_{t=1}^{T}
\Big[
\kappa
\left|
e_{i,t}(z_i;\bm\rho)
\right|
\nonumber\\
&\qquad+
\lambda_p
\left(
p_{i,t}^{\mathrm{ch}}
+
p_{i,t}^{\mathrm{dch}}
\right)
\Big],
\label{eq:reduced_objective}
\\
z_i^\star(\bm\rho)
&\in
\argmin_{
z_i\in\mathcal X_i^{\mathrm{BESS}}
}
\Psi_i(z_i;\bm\rho).
\label{eq:reduced_optimizer}
\end{align}
\end{subequations}

The residual in \eqref{eq:residual} represents
the power imbalance remaining after renewable
generation and BESS regulation. A positive
residual indicates a power deficit and therefore
corresponds to a negative frequency deviation.
The relation in \eqref{eq:frequency_proxy}
provides a quasi-steady approximation rather
than a model of transient frequency dynamics. The operating objective combines absolute residual
imbalance with a throughput penalty on BESS charging
and discharging. The residual and frequency-referenced
penalties are incorporated into $\kappa$, while
$\lambda_p$ regularizes BESS activity without
representing electrochemical degradation.
Minimizing \eqref{eq:reduced_objective} over
$\mathcal X_i^{\mathrm{BESS}}$ determines the physical
BESS schedule induced by the renewable trajectory.

In the native mixed-integer formulation, renewable
generation enters the right-hand side of the power-balance
equation. Introducing nonnegative deficit and surplus
variables gives the equivalent representation
\begin{subequations}
\label{eq:native_rhs_problem}
\begin{align}
p_{i,t}^{\mathrm{bess}}
+
\delta_{i,t}^{\mathrm{def}}
-
\delta_{i,t}^{\mathrm{sur}}
&=
p_{i,t}^{\mathrm{net}}(\bm\rho),
\label{eq:native_balance}
\\
\delta_{i,t}^{\mathrm{def}},
\delta_{i,t}^{\mathrm{sur}}
&\ge
0,
\label{eq:imbalance_variables}
\\
\min_{
z_i,\bm\delta_i^{\mathrm{def}},
\bm\delta_i^{\mathrm{sur}}
}
\quad
&
\Delta t
\sum_{t=1}^{T}
\Big[
\kappa
\left(
\delta_{i,t}^{\mathrm{def}}
+
\delta_{i,t}^{\mathrm{sur}}
\right)
\nonumber\\
&\qquad+
\lambda_p
\left(
p_{i,t}^{\mathrm{ch}}
+
p_{i,t}^{\mathrm{dch}}
\right)
\Big]
\label{eq:native_objective}
\\
\text{s.t.}\quad
&
z_i\in\mathcal X_i^{\mathrm{BESS}},
\quad
\eqref{eq:native_balance},
\quad
\eqref{eq:imbalance_variables}.
\nonumber
\end{align}
\end{subequations}

The deficit and surplus variables account for the
remaining power imbalance in opposite directions.
Unlike the reduced formulation, the native model
places the renewable trajectory explicitly in the
balance constraint. The two representations nevertheless
produce the same optimal physical BESS dispatch,
as established below.

\begin{proposition}
\label{prop:rhs_equivalence}
For any renewable trajectory $\bm\rho$ and fixed
physical BESS action $z_i\in\mathcal X_i^{\mathrm{BESS}}$,
the optimal deficit and surplus variables in
\eqref{eq:native_rhs_problem} satisfy
\begin{subequations}
\label{eq:optimal_slacks}
\begin{align}
\delta_{i,t}^{\mathrm{def}}
&=
\max
\left\{
e_{i,t}(z_i;\bm\rho),0
\right\},
\label{eq:deficit_opt}
\\
\delta_{i,t}^{\mathrm{sur}}
&=
\max
\left\{
-e_{i,t}(z_i;\bm\rho),0
\right\},
\label{eq:surplus_opt}
\\
\delta_{i,t}^{\mathrm{def}}
+
\delta_{i,t}^{\mathrm{sur}}
&=
\left|
e_{i,t}(z_i;\bm\rho)
\right|.
\label{eq:slack_abs}
\end{align}
\end{subequations}
Consequently, \eqref{eq:native_rhs_problem} and
\eqref{eq:reduced_optimizer} are equivalent.
\end{proposition}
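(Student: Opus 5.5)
The plan is to fix $z_i$ and minimize over the slacks. Once $z_i$ is fixed, the native problem \eqref{eq:native_rhs_problem} separates across intervals $t$. The throughput term $\lambda_p(p_{i,t}^{\mathrm{ch}}+p_{i,t}^{\mathrm{dch}})$ depends only on $z_i$, so it is a constant in this inner minimization.

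\emph{Step 1 (per-interval problem).} By \eqref{eq:residual}, constraint \eqref{eq:native_balance} reads $\delta_{i,t}^{\mathrm{def}}-\delta_{i,t}^{\mathrm{sur}}=e_{i,t}(z_i;\bm\rho)$. For each $t$ we must therefore minimize $\kappa(\delta^{\mathrm{def}}+\delta^{\mathrm{sur}})$ subject to $\delta^{\mathrm{def}}-\delta^{\mathrm{sur}}=e$ and $\delta^{\mathrm{def}},\delta^{\mathrm{sur}}\ge 0$. Here $\kappa=1+\lambda_f/B^{\mathrm{sys}}>0$, assuming $\lambda_f\ge 0$ and $B^{\mathrm{sys}}>0$.

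\emph{Step 2 (lower bound and attainment).} By the triangle inequality, any feasible pair gives
\begin{equation*}
\delta^{\mathrm{def}}+\delta^{\mathrm{sur}}\ge|\delta^{\mathrm{def}}-\delta^{\mathrm{sur}}|=|e|.
\end{equation*}
The pair $(\max\{e,0\},\max\{-e,0\})$ is feasible and attains this bound. Uniqueness also follows. Any feasible pair can be written as $(\max\{e,0\}+c,\max\{-e,0\}+c)$ with $c\ge 0$, and its objective exceeds the minimum by $2\kappa c$. Hence the optimum forces $c=0$, which gives \eqref{eq:deficit_opt}, \eqref{eq:surplus_opt} and \eqref{eq:slack_abs}.

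\emph{Step 3 (equivalence).} Substituting the optimal slacks, the partially minimized native objective equals $\Psi_i(z_i;\bm\rho)$ in \eqref{eq:reduced_objective} for every $z_i\in\mathcal X_i^{\mathrm{BESS}}$. Therefore
\begin{equation*}
\min_{z_i,\bm\delta}(\cdot)=\min_{z_i}\min_{\bm\delta}(\cdot)=\min_{z_i\in\mathcal X_i^{\mathrm{BESS}}}\Psi_i(z_i;\bm\rho).
\end{equation*}
The argmin sets correspond as follows. If $(z_i,\bm\delta)$ is optimal for \eqref{eq:native_rhs_problem}, then $z_i\in z_i^\star(\bm\rho)$ and $\bm\delta$ takes the form given above. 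Conversely, any $z_i^\star$ completed with these slacks is native-optimal. The optimal values coincide, and so do the optimal physical dispatches.

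The main obstacle is mild. Strict positivity of $\kappa$ is essential: it is what forces $c=0$ and thus excludes simultaneous deficit and surplus. I would state it explicitly from the sign conventions on $\lambda_f$ and $B^{\mathrm{sys}}$. Existence of minimizers also needs a brief justification. I would argue that $\mathcal X_i^{\mathrm{BESS}}$ is a finite union, over binary patterns, of polytopes, and that $\Psi_i$ is continuous. The equivalence statement can alternatively be phrased purely through equality of the two objective functions after partial minimization, which avoids existence issues altogether.
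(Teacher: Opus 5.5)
Your proof is correct and takes the same route as the paper. Like the paper, you fix $z_i$, reduce \eqref{eq:native_balance} to $\delta_{i,t}^{\mathrm{def}}-\delta_{i,t}^{\mathrm{sur}}=e_{i,t}(z_i;\bm\rho)$, use $\kappa>0$ to force the positive- and negative-part slacks, and substitute back to recover $\Psi_i$ over the same feasible set; your triangle-inequality bound, the $c\ge 0$ uniqueness argument, and the explicit argmin correspondence add rigor to steps the paper states in a single line.
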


\noindent\emph{Proof:}
For fixed $z_i$ and $\bm\rho$,
\eqref{eq:native_balance} requires
$\delta_{i,t}^{\mathrm{def}}-
\delta_{i,t}^{\mathrm{sur}}
=e_{i,t}(z_i;\bm\rho)$.
Since $\kappa>0$, minimizing the sum of the
nonnegative imbalance variables yields the
positive and negative parts of the residual,
respectively. Their sum equals its absolute
value, as given in \eqref{eq:optimal_slacks}.
Substituting this result into
\eqref{eq:native_objective} recovers
\eqref{eq:reduced_objective}, with the same
physical BESS feasible set.
\hfill$\square$

The operating consequence of a forecast is determined
by evaluating its induced BESS schedule under realized
renewable generation. The realized decision cost,
realization-aware reference, decision regret, and
frequency-deviation metric are defined by
\begin{subequations}
\label{eq:realized_decision_model}
\begin{align}
F_i(\bm\rho)
&=
\Psi_i
\left(
z_i^\star(\bm\rho);
\bm r_i
\right),
\label{eq:realized_cost}
\\
z_i^{\mathrm{ref}}
&=
z_i^\star(\bm r_i),
\label{eq:reference_action}
\\
F_i(\bm r_i)
&=
\min_{
z_i\in\mathcal X_i^{\mathrm{BESS}}
}
\Psi_i(z_i;\bm r_i),
\label{eq:reference_cost}
\\
\ell_i^{\mathrm{dec}}(\bm\rho)
&=
F_i(\bm\rho)
-
F_i(\bm r_i),
\label{eq:decision_regret}
\\
\mathrm{IAFD}_i(z_i)
&=
\frac{\Delta t}{B^{\mathrm{sys}}}
\sum_{t=1}^{T}
\left|
p_{i,t}^{\mathrm{load}}
-
r_{i,t}
-
p_{i,t}^{\mathrm{bess}}
\right|.
\label{eq:iafd}
\end{align}
\end{subequations}

The realized cost $F_i(\bm\rho)$ evaluates the
forecast-induced BESS dispatch under realized
renewable generation. The resulting imbalance
is recomputed using $\bm r_i$, while the
physical BESS dispatch remains unchanged. The realization-aware reference instead optimizes
directly with $\bm r_i$ under the same physical
constraints and exogenous balance requirement.
It therefore provides the minimum realized cost
for that sample. Since
$z_i^\star(\bm\rho)\in\mathcal X_i^{\mathrm{BESS}}$,
the decision regret in \eqref{eq:decision_regret}
is nonnegative. The IAFD in \eqref{eq:iafd}
separately quantifies the accumulated magnitude
of realized frequency deviations. For a fixed BESS action, the difference between
the realized and forecast residuals equals
$\epsilon_{i,t}$. However, changing the renewable
forecast also changes the optimized BESS schedule
over the entire horizon. Consequently, forecast
errors with similar magnitudes can induce different
charging and discharging decisions, SOC trajectories,
and realized costs. This distinction links renewable forecast uncertainty
to the intertemporal BESS constraints. Pointwise
forecast accuracy does not account for the operating
consequences of the induced schedule.
The FC-DFL method developed in Section~\ref{sec:fcdfl}
therefore trains the renewable predictor using
the realized decision cost in \eqref{eq:realized_cost}.

\section{Frequency-Calibrated Decision-Focused Learning}
\label{sec:fcdfl}

The BESS model in Section~\ref{sec:physical_model}
shows that renewable forecast errors affect realized
regulation performance through the dispatch they induce.
FC-DFL incorporates this operating consequence into
predictor training. Gaussian smoothing provides a differentiable surrogate of the realized decision cost, while zeroth-order
estimation obtains its gradient without differentiating
through the mixed-integer BESS optimizer.


\subsection{Gaussian-Smoothed Realized Decision Cost}
\label{subsec:gaussian_smoothing}

Unlike the canonical cost-vector SPO formulation
\cite{Elmachtoub2022SPO}, renewable generation
enters the BESS power-balance right-hand side.
The equivalent residual formulation in
Section~\ref{subsec:forecast_coupled_model}
allows forecast-induced dispatch to be evaluated
over the fixed physical BESS feasible set. The realized decision cost in \eqref{eq:realized_cost}
is evaluated using the exact forecast-induced BESS
schedule. Since the optimizer contains binary operating
decisions, its dependence on the renewable forecast
need not be differentiable. FC-DFL instead smooths the realized decision
cost evaluated at projected renewable forecasts.

Let $\mathcal R\subset\mathbb R^T$ be a nonempty,
compact, and convex set of admissible renewable
forecast trajectories, with $\Pi_{\mathcal R}$
denoting Euclidean projection onto this set.
For $\bm U\sim\mathcal N(0,I_T)$ and $\sigma>0$,
the predictor, projected cost, and smoothed
decision objective are defined by
\begin{subequations}
\label{eq:smoothed_objective}
\begin{align}
\hat{\bm r}_i
&=
g_{\theta}(\bm x_i)
\in\mathbb R^T,
\label{eq:renewable_predictor}
\\
\bar F_i(\bm y)
&=
F_i\left(\Pi_{\mathcal R}(\bm y)\right),
\quad \bm y\in\mathbb R^T,
\label{eq:Fbar}
\\
\widetilde F_{i,\sigma}(\bm\rho)
&=
\mathbb E_{\bm U}
\left[
\bar F_i(\bm\rho+\sigma\bm U)
\right],
\label{eq:smoothed_F}
\\
\widetilde\ell_{i,\sigma}^{\mathrm{dec}}(\bm\rho)
&=
\widetilde F_{i,\sigma}(\bm\rho)
-
F_i(\bm r_i).
\label{eq:smoothed_regret}
\end{align}
\end{subequations}

Projection keeps forecast perturbations within
$\mathcal R$, without altering the realized
renewable trajectory $\bm r_i$ used for
cost evaluation and the reference. Because this reference cost is independent of the
forecast, smoothing the decision regret gives the
same gradient as smoothing $F_i$. The following
assumptions establish the mathematical conditions
required for this gradient.

\begin{assumption}
\label{ass:optimizer}
For every $\bm\rho\in\mathcal R$, the BESS problem
\eqref{eq:reduced_optimizer} admits an optimal
physical action. A fixed Borel-measurable selection
rule determines $z_i^\star(\bm\rho)$ when multiple
optimal solutions exist.
\end{assumption}

\begin{assumption}
\label{ass:bounded_F}
The restriction of the realized decision-cost
function $F_i$ to $\mathcal R$ is bounded
and Borel measurable.
\end{assumption}

\begin{theorem}
\label{thm:smoothed_gradient}
Under Assumptions~\ref{ass:optimizer} and
\ref{ass:bounded_F}, the smoothed decision cost
$\widetilde F_{i,\sigma}$ is infinitely
differentiable on $\mathbb R^T$ for every
$\sigma>0$, with
\begin{align}
\nabla_{\bm\rho}
\widetilde F_{i,\sigma}(\bm\rho)
&=
\frac{1}{\sigma}
\mathbb E_{\bm U}
\left[
\bar F_i(\bm\rho+\sigma\bm U)\bm U
\right].
\label{eq:smoothed_gradient_identity}
\end{align}
\end{theorem}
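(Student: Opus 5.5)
The plan is to write the Gaussian smoothing as a convolution and move every derivative onto the Gaussian density. Doing so avoids any differentiability requirement on $\bar F_i$.

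\textbf{Boundedness and measurability.} Since $\Pi_{\mathcal R}$ is the Euclidean projection onto a nonempty, compact, convex set, it is well defined and $1$-Lipschitz, hence continuous and Borel measurable. Assumption~\ref{ass:optimizer} ensures that $z_i^\star(\bm\rho)$ is a fixed measurable selection, so $F_i$ is a well-defined function on $\mathcal R$. Assumption~\ref{ass:bounded_F} then gives that $\bar F_i=F_i\circ\Pi_{\mathcal R}$ is Borel measurable on all of $\mathbb R^T$ and satisfies $|\bar F_i|\le M$ for some $M<\infty$. In particular, $\widetilde F_{i,\sigma}$ is finite everywhere.

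\textbf{Change of variables.} Let $\varphi_\sigma(\bm v)=(2\pi\sigma^2)^{-T/2}\exp(-\|\bm v\|_2^2/(2\sigma^2))$. Substituting $\bm y=\bm\rho+\sigma\bm u$ gives
\begin{align*}
\widetilde F_{i,\sigma}(\bm\rho)=\int_{\mathbb R^T}\bar F_i(\bm y)\,\varphi_\sigma(\bm y-\bm\rho)\,d\bm y .
\end{align*}
Here $\bm\rho$ appears only in the smooth kernel.

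\textbf{Differentiation under the integral.} For any multi-index $\alpha$, $\partial^\alpha_{\bm\rho}\varphi_\sigma(\bm y-\bm\rho)=P_\alpha(\bm y-\bm\rho)\varphi_\sigma(\bm y-\bm\rho)$, where $P_\alpha$ is a polynomial. Fix $\bm\rho_0$ and restrict to $\|\bm\rho-\bm\rho_0\|\le1$. Using $\|\bm y-\bm\rho\|^2\ge\tfrac12\|\bm y-\bm\rho_0\|^2-1$ and $\|\bm y-\bm\rho\|\le\|\bm y-\bm\rho_0\|+1$, the integrand is dominated by
\begin{align*}
M\,C_\alpha\,(1+\|\bm y-\bm\rho_0\|)^{|\alpha|}\exp\!\left(-\|\bm y-\bm\rho_0\|^2/(4\sigma^2)\right),
\end{align*}
which is integrable and independent of $\bm\rho$. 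Applying dominated convergence to difference quotients, via the mean value theorem and induction on $|\alpha|$, justifies interchanging all derivatives with the integral. The resulting derivative integrals are continuous in $\bm\rho$ by the same domination, so $\widetilde F_{i,\sigma}\in C^\infty(\mathbb R^T)$. This domination step is the only point requiring care, and it rests entirely on the boundedness of $\bar F_i$.

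\textbf{The gradient identity.} For the first derivative, $\nabla_{\bm\rho}\varphi_\sigma(\bm y-\bm\rho)=\sigma^{-2}(\bm y-\bm\rho)\varphi_\sigma(\bm y-\bm\rho)$. Hence
\begin{align*}
\nabla_{\bm\rho}\widetilde F_{i,\sigma}(\bm\rho)=\frac{1}{\sigma^2}\int\bar F_i(\bm y)(\bm y-\bm\rho)\varphi_\sigma(\bm y-\bm\rho)\,d\bm y .
\end{align*}
Substituting back $\bm y=\bm\rho+\sigma\bm u$ turns $(\bm y-\bm\rho)/\sigma^2$ into $\bm u/\sigma$, which yields \eqref{eq:smoothed_gradient_identity}. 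The same gradient holds for $\widetilde\ell_{i,\sigma}^{\mathrm{dec}}$, because $F_i(\bm r_i)$ does not depend on $\bm\rho$.
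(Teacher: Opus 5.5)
Your proposal is correct and follows the same route as the paper. Both write $\widetilde F_{i,\sigma}$ as the convolution of the bounded measurable $\bar F_i$ with the Gaussian kernel, move all derivatives onto the kernel, and change variables back to $\bm U$ to obtain the identity. Your explicit locally uniform dominating function is a useful sharpening of the paper's brief remark that every kernel derivative is integrable, because that domination is what actually justifies interchanging differentiation and integration.
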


\noindent\emph{Proof:}
The smoothed objective in \eqref{eq:smoothed_F}
is the convolution of the bounded measurable
function $\bar F_i$ with a Gaussian density.
Every derivative of the Gaussian kernel is
integrable, allowing differentiation under
the integral. Differentiating with respect to
$\bm\rho$ and expressing the result using
$\bm U\sim\mathcal N(0,I_T)$ gives
\eqref{eq:smoothed_gradient_identity}.
The same argument applies to derivatives
of higher orders.
\hfill$\square$

Theorem~\ref{thm:smoothed_gradient} provides a
decision gradient without requiring derivatives
of the mixed-integer optimizer.
Evaluated at the renewable forecast, this
time-resolved feedback is
\begin{align}
g_i^{\mathrm{opt}}
&=
\nabla_{\hat{\bm r}_i}
\widetilde F_{i,\sigma}
\left(\hat{\bm r}_i\right)
\in\mathbb R^T.
\label{eq:gopt}
\end{align}

The component $g_{i,t}^{\mathrm{opt}}$
measures the sensitivity of the smoothed realized
decision cost to the renewable forecast at interval
$t$. Unlike a fixed penalty on forecast error,
this sensitivity reflects the BESS schedule
obtained from the complete forecast trajectory
and the realized operating cost it produces. At continuity points of $\bar F_i$,
$\widetilde F_{i,\sigma}(\bm\rho)$ converges
to $\bar F_i(\bm\rho)$ as $\sigma\to0$.


\subsection{Zeroth-Order Decision Feedback}
\label{subsec:zeroth_order}

The gradient in
\eqref{eq:smoothed_gradient_identity}
can be estimated from realized decision-cost
evaluations obtained using the exact BESS
optimizer. For independent directions
$\bm u_{i,k}\sim\mathcal N(0,I_T)$,
$k=1,\ldots,K$, the one-sided and antithetic
estimators are
\begin{subequations}
\label{eq:gradient_estimators}
\begin{align}
\widehat g_{i,\mathrm{one}}^{\mathrm{opt}}
&=
\frac{1}{K\sigma}
\sum_{k=1}^{K}
\Big[
\bar F_i
\left(
\hat{\bm r}_i+\sigma\bm u_{i,k}
\right)
\nonumber\\
&\qquad-
\bar F_i
\left(\hat{\bm r}_i\right)
\Big]\bm u_{i,k},
\label{eq:one_sided_estimator}
\\
\widehat g_{i,\mathrm{anti}}^{\mathrm{opt}}
&=
\frac{1}{2K\sigma}
\sum_{k=1}^{K}
\Big[
\bar F_i
\left(
\hat{\bm r}_i+\sigma\bm u_{i,k}
\right)
\nonumber\\
&\qquad-
\bar F_i
\left(
\hat{\bm r}_i-\sigma\bm u_{i,k}
\right)
\Big]\bm u_{i,k}.
\label{eq:antithetic_estimator}
\end{align}
\end{subequations}

\begin{proposition}
\label{prop:estimator_unbiased}
Under the conditions of
Theorem~\ref{thm:smoothed_gradient},
both estimators in
\eqref{eq:gradient_estimators}
are conditionally unbiased for the same
smoothed decision gradient:
\begin{subequations}
\label{eq:estimator_unbiased}
\begin{align}
\mathbb E
\left[
\widehat g_{i,\mathrm{one}}^{\mathrm{opt}}
\mid\theta,\bm x_i,\bm r_i
\right]
&=
g_i^{\mathrm{opt}},
\label{eq:one_unbiased}
\\
\mathbb E
\left[
\widehat g_{i,\mathrm{anti}}^{\mathrm{opt}}
\mid\theta,\bm x_i,\bm r_i
\right]
&=
g_i^{\mathrm{opt}}.
\label{eq:anti_unbiased}
\end{align}
\end{subequations}
\end{proposition}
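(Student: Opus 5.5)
The plan is to condition on $(\theta,\bm x_i,\bm r_i)$. This fixes the forecast $\hat{\bm r}_i=g_\theta(\bm x_i)$ and the function $\bar F_i$. The only remaining randomness is then in the i.i.d.\ directions $\bm u_{i,k}\sim\mathcal N(0,I_T)$, which are drawn independently of $(\theta,\bm x_i,\bm r_i)$. By linearity of conditional expectation, and because the $K$ summands are identically distributed, it suffices to treat a single direction $\bm U\sim\mathcal N(0,I_T)$. We then compare each one-term expectation with the identity \eqref{eq:smoothed_gradient_identity} of Theorem~\ref{thm:smoothed_gradient}, evaluated at $\bm\rho=\hat{\bm r}_i$, which equals $g_i^{\mathrm{opt}}$ by \eqref{eq:gopt}.

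First we check integrability. Since $\Pi_{\mathcal R}$ maps into $\mathcal R$ and is continuous (nonexpansive, because $\mathcal R$ is nonempty, closed and convex), the composition $\bar F_i=F_i\circ\Pi_{\mathcal R}$ is Borel measurable. By Assumption~\ref{ass:bounded_F} it is also bounded, say $|\bar F_i|\le M$. Hence every product $\bar F_i(\hat{\bm r}_i\pm\sigma\bm U)\bm U$ and $\bar F_i(\hat{\bm r}_i)\bm U$ is integrable, with norm bounded by $M\,\mathbb E\|\bm U\|_2<\infty$. This justifies splitting the expectations term by term.

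For the one-sided estimator \eqref{eq:one_sided_estimator}, the single-direction expectation is
\begin{align*}
\tfrac{1}{\sigma}\mathbb E\big[\bar F_i(\hat{\bm r}_i+\sigma\bm U)\bm U\big]
-\tfrac{1}{\sigma}\bar F_i(\hat{\bm r}_i)\,\mathbb E[\bm U].
\end{align*}
The baseline $\bar F_i(\hat{\bm r}_i)$ is deterministic under the conditioning, and $\mathbb E[\bm U]=0$, so the second term vanishes. The first term is exactly \eqref{eq:smoothed_gradient_identity}. Averaging over $k$ gives \eqref{eq:one_unbiased}.

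For the antithetic estimator \eqref{eq:antithetic_estimator}, we use the symmetry $-\bm U\overset{d}{=}\bm U$. This gives
\begin{align*}
\mathbb E\big[\bar F_i(\hat{\bm r}_i-\sigma\bm U)\bm U\big]
=-\mathbb E\big[\bar F_i(\hat{\bm r}_i+\sigma\bm U)\bm U\big].
\end{align*}
The bracketed difference therefore has expectation $2\,\mathbb E[\bar F_i(\hat{\bm r}_i+\sigma\bm U)\bm U]$. Dividing by $2\sigma$ and averaging over $k$ again yields $g_i^{\mathrm{opt}}$, which is \eqref{eq:anti_unbiased}.

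The main point needing care is the conditioning itself. Conditioning must make $\hat{\bm r}_i$ and $\bar F_i(\hat{\bm r}_i)$ deterministic, while leaving the law of the $\bm u_{i,k}$ unchanged. This holds because the directions are sampled independently of $(\theta,\bm x_i,\bm r_i)$. Assumption~\ref{ass:optimizer}, through its fixed measurable selection rule, makes $F_i$ a well-defined deterministic function, so no hidden solver randomness enters. The remaining steps are elementary.
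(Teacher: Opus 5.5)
Your proposal is correct and follows the paper's own proof. You fix the baseline by conditioning, remove it via $\mathbb E[\bm U]=0$, and invoke Theorem~\ref{thm:smoothed_gradient} for the one-sided estimator; for the antithetic estimator you use the Gaussian symmetry $-\bm U\overset{d}{=}\bm U$, and your integrability and independence checks just make explicit what the paper leaves implicit.
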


\noindent\emph{Proof:}
Conditional on the current sample and predictor
parameters, $\bar F_i(\hat{\bm r}_i)$ is fixed.
Since $\mathbb E[\bm U]=0$, subtracting this
baseline does not change the expected gradient.
Theorem~\ref{thm:smoothed_gradient} therefore
establishes \eqref{eq:one_unbiased}.
For the antithetic estimator, the symmetry
of the Gaussian distribution gives
\[
\mathbb E
\left[
\bar F_i(\hat{\bm r}_i-\sigma\bm U)\bm U
\right]
=
-
\mathbb E
\left[
\bar F_i(\hat{\bm r}_i+\sigma\bm U)\bm U
\right].
\]
Substitution into \eqref{eq:antithetic_estimator}
yields \eqref{eq:anti_unbiased}.
\hfill$\square$
Each perturbed renewable forecast induces a BESS
dispatch schedule whose realized cost reflects
intertemporal SOC constraints and operating decisions.
The resulting decision gradient provides feedback
on BESS regulation performance rather than
forecast accuracy alone. FC-DFL training uses the baseline-centered
one-sided estimator, requiring $K+1$ BESS
optimizations per gradient estimate.
The antithetic estimator estimates the same
gradient and is used for decision-oriented
analysis in Section~\ref{sec:xai}.


\subsection{FC-DFL Predictor Training}
\label{subsec:fcdfl_training}

The smoothed decision objective connects
renewable forecasting with realized BESS
regulation performance. For $N$ training
samples, FC-DFL minimizes the average
smoothed decision regret. The corresponding
gradient and backpropagation surrogate are
\begin{subequations}
\label{eq:fcdfl_training}
\begin{align}
\mathcal L_{\mathrm{FC}}(\theta)
&=
\frac{1}{N}
\sum_{i=1}^{N}
\widetilde\ell_{i,\sigma}^{\mathrm{dec}}
\left(
g_{\theta}(\bm x_i)
\right),
\label{eq:fc_objective}
\\
\nabla_{\theta}
\mathcal L_{\mathrm{FC}}(\theta)
&=
\frac{1}{N}
\sum_{i=1}^{N}
\left[
\nabla_{\theta}
g_{\theta}(\bm x_i)
\right]^\top
g_i^{\mathrm{opt}},
\label{eq:fc_chain_rule}
\\
\mathcal L_i^{\mathrm{back}}
&=
\left\langle
\operatorname{sg}
\left(
\widehat g_{i,\mathrm{one}}^{\mathrm{opt}}
\right),
\hat{\bm r}_i
\right\rangle,
\label{eq:pseudo_loss}
\\
\nabla_{\theta}
\mathcal L_i^{\mathrm{back}}
&=
\left[
\nabla_{\theta}
g_{\theta}(\bm x_i)
\right]^\top
\widehat g_{i,\mathrm{one}}^{\mathrm{opt}}.
\label{eq:pseudo_gradient}
\end{align}
\end{subequations}

The predictor Jacobian combines with the
decision gradient $g_i^{\mathrm{opt}}$,
allowing forecast periods to contribute
according to their realized operating
consequences. In implementation, the stochastic feedback
$\widehat g_{i,\mathrm{one}}^{\mathrm{opt}}$
replaces the expected gradient in this
expression. The stop-gradient operator
$\operatorname{sg}(\cdot)$ in
\eqref{eq:pseudo_loss} treats the solver
feedback as fixed during backpropagation.
Differentiating the surrogate therefore
passes the estimated decision gradient
through the renewable predictor without
differentiating the mixed-integer BESS
optimization problem. The resulting training procedure adjusts
renewable forecasts according to the
realized costs of their induced BESS
schedules, rather than minimizing forecast
MSE alone. Because the same
$g_i^{\mathrm{opt}}$ is defined for each
forecast interval, it also provides the
optimization feedback used for
decision-oriented interpretation in
Section~\ref{sec:xai}.

\section{Decision-Oriented Explainability}
\label{sec:xai}

The optimization feedback developed in
Section~\ref{sec:fcdfl} links renewable forecast
changes to realized BESS operating cost.
This section uses the same feedback to identify
predictor inputs whose effects on renewable forecasts
have greater downstream importance.
The resulting importance scores are evaluated through
matched perturbations and exact BESS reoptimization.


\subsection{Decision-Oriented Feature Importance}
\label{subsec:exact_xai}

Prediction sensitivity measures how an input feature
affects renewable forecasts, but it does not account
for the resulting BESS operating decisions.
To incorporate the downstream consequence,
consider the differentiable predictor
$g_\theta$ and the smoothed realized cost
$\widetilde F_{i,\sigma}$.
Their relationship is given by
\begin{subequations}
\label{eq:exact_xai}
\begin{align}
J_i^{\mathrm{pred}}
&=
\nabla_{\bm x_i}g_\theta(\bm x_i)
\in\mathbb R^{T\times d},
\label{eq:prediction_jacobian}
\\
G_{i,j,t}^{\mathrm{pred}}
&=
\left|
\left[J_i^{\mathrm{pred}}\right]_{t,j}
\right|,
\label{eq:prediction_magnitude}
\\
H_i(\bm x)
&=
\widetilde F_{i,\sigma}
\left(g_\theta(\bm x)\right),
\label{eq:downstream_map}
\\
\frac{\partial H_i(\bm x_i)}
{\partial x_{i,j}}
&=
\sum_{t\in\mathcal T}
g_{i,t}^{\mathrm{opt}}
\left[J_i^{\mathrm{pred}}\right]_{t,j}.
\label{eq:signed_sum}
\end{align}
\end{subequations}

The predictor Jacobian captures the sensitivity
of renewable forecasts to input features.
In \eqref{eq:signed_sum}, this sensitivity is
weighted by the decision gradient, which reflects
the realized cost of forecast-induced BESS dispatch.
A feature may therefore strongly influence renewable
predictions yet have limited importance for BESS
operation. For importance ranking, the decision-gradient
magnitude is normalized across the forecast horizon
and combined with prediction sensitivity:
\begin{subequations}
\label{eq:xai_magnitude_score}
\begin{align}
\bar g_{i,t}^{\mathrm{opt}}
&=
\begin{cases}
\dfrac{|g_{i,t}^{\mathrm{opt}}|}
{\displaystyle\sum_{\tau\in\mathcal T}
|g_{i,\tau}^{\mathrm{opt}}|},
&
\|g_i^{\mathrm{opt}}\|_1>0,
\\[2mm]
0,
&
\|g_i^{\mathrm{opt}}\|_1=0,
\end{cases}
\label{eq:normalized_gopt}
\\
S_{i,j,t}^{\mathrm{E2E}}
&=
\bar g_{i,t}^{\mathrm{opt}}
G_{i,j,t}^{\mathrm{pred}},
\label{eq:e2e_score}
\\
S_{i,j}^{\mathrm{feat}}
&=
\sum_{t\in\mathcal T}
S_{i,j,t}^{\mathrm{E2E}},
\label{eq:feature_score}
\\
S_{i,t}^{\mathrm{time}}
&=
\bar g_{i,t}^{\mathrm{opt}}.
\label{eq:time_score}
\end{align}
\end{subequations}

The feature score in \eqref{eq:feature_score}
aggregates prediction sensitivity weighted by
the normalized decision gradient across the
forecast horizon. The temporal score in
\eqref{eq:time_score} ranks forecast periods
according to their normalized decision feedback.
Therefore, high prediction sensitivity does not
necessarily imply high decision-oriented importance,
as the latter also depends on the realized-cost
sensitivity of the affected forecast periods.
These magnitude-based scores are used for
importance ranking.


\subsection{Estimation of Decision-Oriented Importance Scores}
\label{subsec:xai_mc}

The decision gradient is estimated using
the antithetic estimator in
\eqref{eq:antithetic_estimator}.
This estimator evaluates the realized BESS cost
under symmetric renewable forecast perturbations
and targets the same smoothed gradient used
in FC-DFL training.

For $K_{\mathrm{xai}}$ antithetic perturbation
pairs, the estimated feedback and importance
scores are
\begin{subequations}
\label{eq:xai_mc}
\begin{align}
\widehat g_i^{\mathrm{xai}}
&=
\left.
\widehat g_{i,\mathrm{anti}}^{\mathrm{opt}}
\right|_{K=K_{\mathrm{xai}}},
\label{eq:xai_gradient_average}
\\
\widehat{\bar g}_{i,t}^{\mathrm{opt}}
&=
\begin{cases}
\dfrac{
|\widehat g_{i,t}^{\mathrm{xai}}|
}{
\displaystyle\sum_{\tau\in\mathcal T}
|\widehat g_{i,\tau}^{\mathrm{xai}}|
},
&
\|\widehat g_i^{\mathrm{xai}}\|_1>0,
\\[2mm]
0,
&
\|\widehat g_i^{\mathrm{xai}}\|_1=0,
\end{cases}
\label{eq:estimated_normalized_gradient}
\\
\widehat S_{i,j,t}^{\mathrm{E2E}}
&=
\widehat{\bar g}_{i,t}^{\mathrm{opt}}
G_{i,j,t}^{\mathrm{pred}}.
\label{eq:estimated_xai_score}
\end{align}
\end{subequations}

The estimated decision gradient captures the
temporal sensitivity of realized BESS cost to
renewable forecast changes. By weighting the
predictor sensitivity with this feedback,
$\widehat S_{i,j,t}^{\mathrm{E2E}}$ identifies
input features that influence forecast periods
with greater importance to BESS operation.
The resulting rankings are further examined
through BESS reoptimization in
Section~\ref{subsec:faithfulness}.


\subsection{Operational Faithfulness}
\label{subsec:faithfulness}

The importance ranking is validated by
comparing the realized BESS cost changes
induced by matched perturbations of
high- and low-ranked groups.
All perturbed forecasts are evaluated
using the same physical BESS model and
realized renewable generation. For feature-level validation, the downstream
cost associated with a modified predictor
input is
\begin{subequations}
\label{eq:feature_faithfulness}
\begin{gather}
Q_i(\bm x)
=
F_i\left(
\Pi_{\mathcal R}\left[g_\theta(\bm x)\right]
\right),
\label{eq:exact_downstream_map}
\\
\begin{aligned}[b]
\Delta_{i,A}^{\mathrm{feat}}
=
\frac{1}{2}
\sum_{\nu\in\{-1,+1\}}
\Big|
Q_i\left(
\bm x_i+\nu\delta_x
\sum_{j\in A}\bm e_j^{(d)}
\right) -
Q_i(\bm x_i)
\Big|.
\end{aligned}
\label{eq:feature_perturbation_effect}
\end{gather}
\end{subequations}

High- and low-ranked renewable-history feature
groups are selected by aggregating
$\widehat S_{i,j,t}^{\mathrm{E2E}}$ over the
forecast horizon.
All features within a group receive the same
signed perturbation simultaneously.
Only features permitting symmetric changes
without negative historical generation are eligible.
The remaining contextual inputs and exogenous
regulation requirement are held fixed.

Forecast periods are ranked using the
estimated temporal decision importance
$\widehat{\bar g}_{i,t}^{\mathrm{opt}}$:
\begin{subequations}
\label{eq:temporal_faithfulness}
\begin{gather}
\bm\rho_i^0
=
\Pi_{\mathcal R}\left(\hat{\bm r}_i\right),
\label{eq:forecast_center}
\\
\begin{aligned}[b]
\Delta_{i,A}^{\mathrm{time}}
=
\frac{1}{2} 
\sum_{\nu\in\{-1,+1\}}
\Big|
F_i\left(
\bm\rho_i^0+\nu\delta_{r,i}
\sum_{t\in A}\bm e_t^{(T)}
\right)
-F_i(\bm\rho_i^0)
\Big|.
\end{aligned}
\label{eq:time_perturbation_effect}
\end{gather}
\end{subequations}

Equal-sized high- and low-ranked groups receive
matched positive and negative perturbations.
For forecast-period groups, the common step
$\delta_{r,i}$ is selected so that both
perturbation directions remain in $\mathcal R$.
The resulting changes in realized decision cost
are evaluated through BESS reoptimization. Comparing the cost changes of high- and low-ranked
groups provides an empirical assessment of
whether the importance scores reflect downstream
BESS operation. The FC-DFL training and decision-oriented
explanation procedure is summarized in
Algorithm~\ref{alg:fcdfl_xai}.
The method uses the same realized BESS cost
for both stages, but different estimators
of the corresponding smoothed decision gradient.

\begin{algorithm}[t]
\caption{\textbf{FC-DFL Training and Decision-Oriented Explainability}}
\label{alg:fcdfl_xai}
\begin{algorithmic}[1]

\Require
Training and validation data;
XAI samples;
BESS model;
$\mathcal R$, $\sigma$, $K$, and $K_{\mathrm{xai}}$.

\Ensure
Trained predictor $\theta^\star$,
decision-oriented importance scores,
and perturbation results.

\State Initialize the renewable predictor $g_\theta$.

\Repeat
    \State Evaluate baseline and perturbed realized
    BESS costs for each sample in a training minibatch.

    \State Estimate $\widehat g_{i,\mathrm{one}}^{\mathrm{opt}}$
    using \eqref{eq:one_sided_estimator}
    and update $\theta$ through \eqref{eq:pseudo_loss}.

\Until{the training stopping criterion is satisfied}

\State Select $\theta^\star$ according to
realized validation cost.

\For{each XAI sample $i$}
    \State Compute prediction sensitivity and
    antithetic decision feedback.

    \State Rank features and forecast periods
    using \eqref{eq:e2e_score}--\eqref{eq:time_score}.

    \State Evaluate matched group perturbations
    through BESS reoptimization.
\EndFor

\end{algorithmic}
\end{algorithm}

\section{Case Studies}
\label{sec:case_study}

This section evaluates FC-DFL in terms of realized BESS regulation
performance, physical operating behavior, robustness, and
decision-oriented interpretability.

\subsection{Experimental Settings}
\label{subsec:experimental_settings}

The experiments use German hourly time-series data from
Open Power System Data (OPSD)~\cite{OPSD2020},
including measured wind and solar generation and system load. Each sample consists of a 48-h observation history followed by
a 48-h renewable forecasting and BESS dispatch horizon.
The predictor uses historical renewable generation, historical
load, and cyclical calendar features.
The evaluation includes 560 training, 140 validation, and
100 chronologically held-out test windows with non-overlapping
dispatch horizons. The test set covers March--September 2020,
providing 4,800 h of measured renewable generation for
out-of-sample evaluation. The exogenous balance requirement
$p_{i,t}^{\mathrm{load}}$ is constructed by repeating
the preceding 24-h renewable profile over the dispatch
horizon. Scheduled non-BESS supply is determined
from the measured system load and this requirement.
All methods face the same balance requirement,
with renewable generation as the only uncertain quantity.
A common scaling factor determined from the training
data maps the measured trajectories to a study system
with an average system load of 80 MW. The nominal BESS is rated at 100 MW/400 MWh, with 30 MW
of retained power headroom. Charging and discharging
efficiencies are 0.92, and SOC is constrained to $[0.2,0.9]$
with initial and terminal values of 0.5. The ramp-rate limit
is 60 MW/h, and the frequency coefficient is
$B^{\mathrm{sys}}=5000$ MW/Hz. The objective weights are
$\lambda_f=0.1$ and $\lambda_p=10^{-4}$.

Six methods are compared under identical BESS operating
conditions. Uncontrolled applies no BESS regulation, while
the Realization-Aware Reference uses realized renewable
generation. ERM-PTO minimizes forecast MSE, DS-DFL employs
a differentiable BESS surrogate, and FC-DFL uses the proposed
Gaussian-smoothed decision feedback. Empirical Scenario SD
optimizes dispatch using ERM-PTO forecasts and ten empirical
validation-error scenarios. The three learned predictors
share the same architecture, with DS-DFL and FC-DFL
initialized from ERM-PTO and selected using realized
validation cost. FC-DFL uses $\sigma_0=1.3483$ MW and $K=4$.
All methods are evaluated over five independent random seeds
and 100 common test windows using the same binary BESS model. The computational framework is implemented in Python using
PyTorch 2.14.1 for learning, CVXPY 1.6.7 and CVXPYLayers 0.1.9
for differentiable surrogate training, and Gurobi 13.0.3
for mixed-integer BESS dispatch optimization.

\subsection{Forecast Accuracy and Realized Regulation Performance}
\label{subsec:main_performance}

\begin{table}[t]
\centering
\caption{Forecast accuracy and realized BESS regulation performance.}
\label{tab:main_performance}
\setlength{\tabcolsep}{2pt}
\begin{tabular}{lccccc}
\hline \hline
Method
&
\shortstack{RMSE\\(MW)}
&
\shortstack{IAFD\\(Hz$\cdot$h)}
&
\shortstack{IAFD\\reduction\\(\%)}
&
\shortstack{Decision\\regret}
&
\shortstack{BESS\\throughput\\(MWh)}
\\
\hline

Uncontrolled
& --
& 0.099
& 0.00
& 114.70
& 0.0
\\

ERM-PTO
& $\mathbf{9.73}$
& 0.101
& -2.77
& 128.37
& 204.0
\\

DS-DFL
& 9.76
& 0.100
& -1.82
& 123.68
& 194.8
\\

Scenario SD
& $\mathbf{9.73}$
& 0.096
& 2.91
& 100.34
& 199.3
\\

FC-DFL
& 11.00
& $\mathbf{0.095}$
& $\mathbf{3.45}$
& $\mathbf{97.66}$
& 79.2
\\

Reference
& --
& 0.076
& 23.25
& 0.00
& 397.3
\\

\hline \hline
\end{tabular}
\end{table}

FC-DFL achieves the lowest mean IAFD of
0.09529 Hz$\cdot$h and decision regret of 97.66
among the deployable methods, as summarized in
Table~\ref{tab:main_performance}.
Compared with ERM-PTO, FC-DFL reduces IAFD
by 6.05\% and decision regret by 23.92\%.
A further 5.18\% IAFD reduction is achieved
relative to DS-DFL.
These improvements demonstrate the value of
training renewable forecasts according to the
realized cost of their induced BESS dispatch.

The benefit of FC-DFL is also evident
relative to Uncontrolled.
Its mean IAFD is reduced by 3.45\%,
whereas ERM-PTO and DS-DFL increase IAFD
by 2.77\% and 1.82\%, respectively.
ERM-PTO achieves the lowest forecasting
RMSE of 9.73 MW, but its dispatch produces
a higher IAFD than Uncontrolled.
This contrast highlights the importance of
forecast-induced BESS decisions in determining
realized regulation performance.
Moreover, FC-DFL reduces BESS throughput
by 61.17\% relative to ERM-PTO, achieving
better regulation with substantially less
charging and discharging activity.
Its mean IAFD is also 0.56\% below Scenario SD.

The IAFD distributions over 100 held-out
test windows are illustrated in
Fig.~\ref{fig:iafd_distribution}.
Each distribution is obtained by averaging
over five seeds within each test window.
Paired Wilcoxon signed-rank tests further
support the improvements achieved by FC-DFL
over ERM-PTO ($p=0.0066$) and DS-DFL
($p=0.0310$).
The corresponding mean IAFD reductions are
0.00614 and 0.00520 Hz$\cdot$h, respectively. Further analysis reveals larger improvements
in windows with high ERM-PTO decision regret.
After grouping the test windows into quartiles
according to \eqref{eq:decision_regret},
FC-DFL achieves mean IAFD reductions of
0.00352 and 0.02210 Hz$\cdot$h in the third
and fourth quartiles, respectively.
It outperforms ERM-PTO in 68\% and 72\%
of these windows.
These results indicate that the benefit of
decision-focused forecasting is particularly
pronounced when conventional forecasts
induce costly BESS operating decisions.

\begin{figure}[t]
\centering
\includegraphics[width=0.85\columnwidth]
{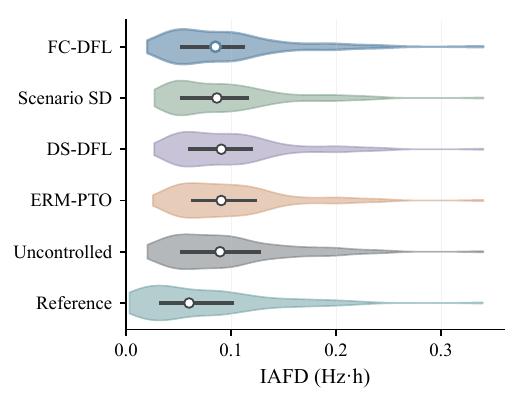}
\vspace{-1em}
\caption{Held-out test-window IAFD distributions.}
\label{fig:iafd_distribution}
\end{figure}

\subsection{Forecast-Induced BESS Operation}
\label{subsec:physical_consequences}

The physical effects of forecast-induced BESS
dispatch are illustrated in
Fig.~\ref{fig:physical_chain}, using seed 4
and test window 42 beginning on June 5, 2020,
at 22:00 UTC. The four panels trace the
relationship between renewable forecasts,
BESS operation, and realized frequency deviations. As illustrated in Fig.~\ref{fig:physical_chain}(a),
the renewable forecasts deviate from the realized
generation at different intervals, particularly
during substantial changes in renewable power.
These forecasts induce different BESS charging
and discharging schedules under the same physical
constraints.

\begin{figure*}[t]
\centering
\includegraphics[width=0.9\textwidth]
{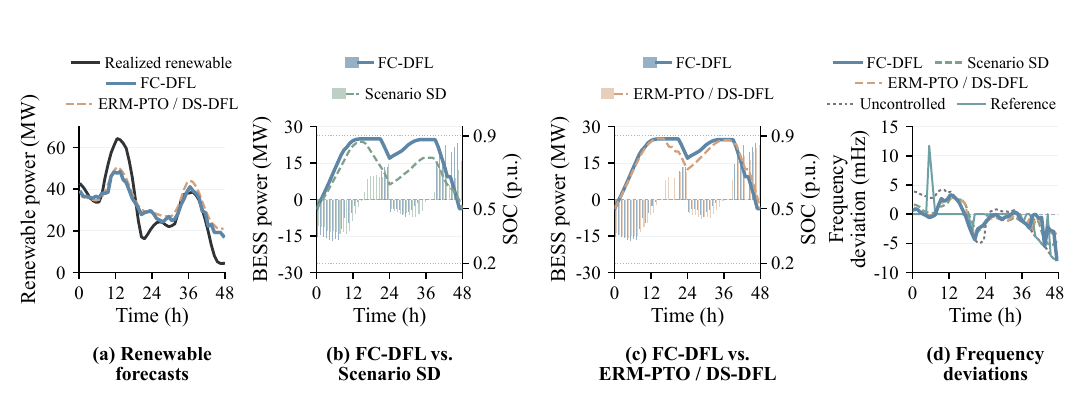}
\vspace{-1em}
\caption{Forecast-induced BESS operation in the representative test window: (a) renewable forecasts; (b) FC-DFL and Scenario SD BESS power and SOC; (c) FC-DFL and ERM-PTO/DS-DFL BESS power and SOC; and (d) realized frequency deviations.}
\label{fig:physical_chain}
\end{figure*}

The resulting BESS power and SOC trajectories
are presented in Figs.~\ref{fig:physical_chain}(b)
and (c). ERM-PTO and DS-DFL produce identical
dispatch trajectories in this window.
Compared with Scenario SD, FC-DFL maintains
a higher SOC over much of the middle horizon,
reflecting a different allocation of charging
and discharging power. These differences are
important because earlier BESS decisions change
the stored energy available for subsequent
regulation, as described by the SOC dynamics
in \eqref{eq:energy_dynamics}. The corresponding realized frequency deviations
are illustrated in Fig.~\ref{fig:physical_chain}(d).
FC-DFL achieves an IAFD of 0.07427 Hz$\cdot$h,
representing a 47.37\% reduction relative to
Uncontrolled (0.14112 Hz$\cdot$h).
Further IAFD reductions of 4.69\% and 8.57\%
are obtained relative to ERM-PTO/DS-DFL
and Scenario SD, respectively.
Meanwhile, FC-DFL requires 388.87 MWh of BESS
throughput, compared with 437.51 MWh for
ERM-PTO/DS-DFL.
The results illustrate how forecast-induced
BESS dispatch and intertemporal SOC evolution
affect realized regulation performance.


\subsection{Robustness Across BESS and Renewable Conditions}
\label{subsec:robustness}

The robustness of FC-DFL is evaluated under different
BESS operating conditions. The energy capacity is varied
from 200 to 600 MWh, and the ramp-rate limit from
30 to 90 MW/h. The learned predictors remain fixed,
while BESS dispatch is reoptimized for each configuration.
Across all tested configurations, FC-DFL achieves the
lowest mean IAFD among the deployable methods, as
illustrated in Figs.~\ref{fig:robustness}(a) and (b).
Relative to ERM-PTO, the IAFD reduction remains
between 6.05\% and 6.10\% for different energy
capacities and between 5.40\% and 6.13\% for
different ramp-rate limits. These results demonstrate
that the improvement is maintained across the tested
BESS operating conditions.

\begin{figure}[t]
\centering
\includegraphics[width=0.85\columnwidth]
{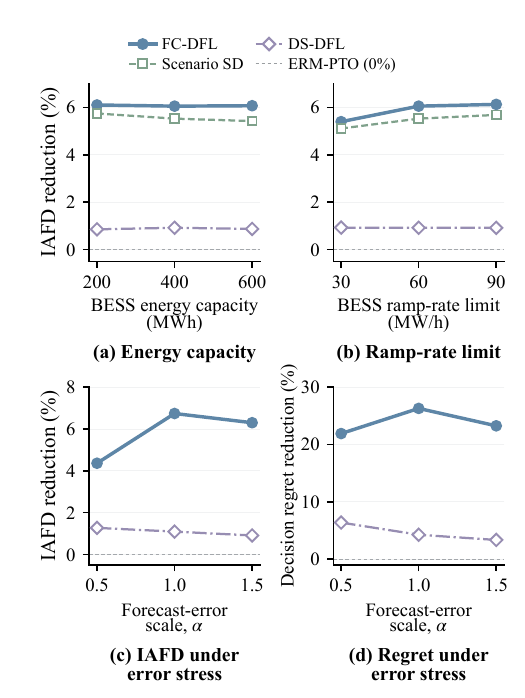}
\vspace{-1em}
\caption{Robustness to BESS operating limits and
renewable forecast-error stress, which reductions are
relative to ERM-PTO.}
\label{fig:robustness}
\end{figure}

Renewable forecast-error stress is further examined
using constructed renewable realizations based on
ERM-PTO forecasts and empirical validation-error
trajectories scaled by $\alpha\in\{0.5,1.0,1.5\}$.
For each seed and test window, the same error
trajectory is scaled across the three stress levels.
The learned predictors and exogenous balance
requirement remain unchanged. Under these conditions, FC-DFL consistently
achieves lower IAFD and decision regret than
ERM-PTO and DS-DFL, as presented in
Figs.~\ref{fig:robustness}(c) and (d).
The IAFD reductions relative to ERM-PTO reach
4.36\%, 6.74\%, and 6.30\% at the three stress
levels, respectively. Corresponding reductions
in decision regret range from 21.9\% to 26.3\%.
Together, these results demonstrate that FC-DFL
maintains its regulation advantage across the
tested BESS configurations and renewable
forecast-error conditions.

\subsection{Decision-Oriented Importance and Operational Faithfulness}
\label{subsec:xai_case}

The XAI evaluation considers 50 selected seed-window
cases, including 45 with nonzero decision feedback.
The antithetic estimator uses $K_{\mathrm{xai}}=64$,
with prediction sensitivities computed for
standardized inputs. The resulting feature-time
importance scores in \eqref{eq:estimated_xai_score}
are illustrated in Fig.~\ref{fig:xai_faithfulness}(a).
Among the 98 predictor inputs, the 1-h and 2-h
renewable-history lags receive the highest aggregate
importance, followed by the day-of-year features.
The importance also varies across forecast periods,
indicating that the influence of an input on BESS
operation depends on the timing of its forecast contribution.

\begin{figure}[t]
\centering
\includegraphics[width=0.85\columnwidth]
{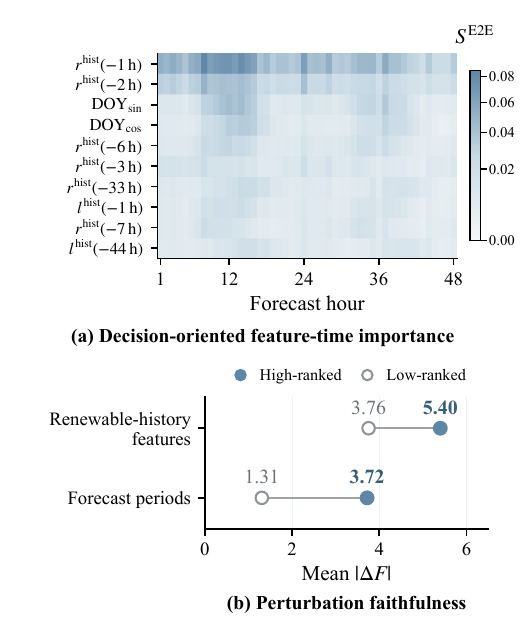}
\vspace{-1em}
\caption{Decision-oriented importance and operational faithfulness: (a) feature-time importance and (b) matched-perturbation faithfulness.}
\label{fig:xai_faithfulness}
\end{figure}

A clear distinction emerges between prediction
sensitivity and decision-oriented importance.
For temporal comparison, prediction sensitivity
is aggregated across input features as
$\sum_j G_{i,j,t}^{\mathrm{pred}}$ and compared
with the estimated temporal decision importance
$\widehat{\bar g}_{i,t}^{\mathrm{opt}}$.
As reported in Table~\ref{tab:xai_faithfulness},
the mean Spearman correlation between their
temporal rankings is only 0.019, with an average
top-5 overlap of 0.67 periods.
The five highest-ranked decision periods account
for 28.7\% of the total decision weight, compared
with 10.4\% under uniform weighting.
These results highlight the difference between
forecast periods sensitive to predictor inputs
and those important to realized BESS operation. Operational faithfulness is further examined
through matched perturbations and exact BESS
reoptimization, following
\eqref{eq:feature_faithfulness} and
\eqref{eq:temporal_faithfulness}.
High- and low-ranked renewable-history groups
each contain five eligible features, perturbed
by 0.1 training standard deviation.
Forecast-period groups contain five periods
with matched feasible perturbations of up to 1 MW.
The resulting changes in realized decision cost
are evaluated under the same renewable realization.


\begin{table}[t]
\centering
\caption{Decision-Oriented XAI Evaluation}
\label{tab:xai_faithfulness}
\renewcommand{\arraystretch}{1.12}
\setlength{\tabcolsep}{2.8pt}

\begin{tabular}{lcccc}
\hline\hline

\multicolumn{5}{l}{
\parbox{0.98\columnwidth}{
\textit{A. Importance of different period}
}} \\
\hline
\multicolumn{4}{l}{Mean Spearman rank correlation}
& 0.019 \\
\multicolumn{4}{l}{Mean top-5 overlap}
& 0.67 \\
\multicolumn{4}{l}{Decision weight captured by top-5 periods}
& 28.7\% \\

\hline
\multicolumn{5}{l}{
\parbox{0.98\columnwidth}{
\textit{B. Operational Validation Using Matched Perturbations}
}} \\
\multicolumn{5}{l}{
\textit{Mean absolute change in realized operating cost, $|\Delta F|$}
} \\
\hline

Group
& {High}
& {Low}
& Ratio
& $p$-value \\
\hline

Renewable history
& 5.40
& 3.76
& 1.44
& 0.0147 \\

Forecast periods
& 3.72
& 1.31
& $\mathbf{2.84}$
& $1.76\times10^{-4}$ \\

\hline\hline
\end{tabular}
\end{table}

The operational relevance of the importance rankings
is evaluated in Fig.~\ref{fig:xai_faithfulness}(b)
and Table~\ref{tab:xai_faithfulness}.
High-ranked renewable-history groups produce
a mean $|\Delta F|$ of 5.40, compared with
3.76 for low-ranked groups, representing
a 1.44-fold difference.
The distinction is more pronounced for
forecast periods, where high-ranked groups
produce 2.84 times the realized cost change
of low-ranked groups (3.72 versus 1.31).
Both differences are statistically significant
under paired Wilcoxon signed-rank tests. The larger realized cost changes for high-ranked
groups support the operational relevance of
the proposed importance rankings.
Unlike prediction sensitivity alone, the
decision-oriented importance scores account for
the sensitivity of realized BESS cost to forecast
changes. The matched perturbation results further
demonstrate that the decision feedback used for
FC-DFL training also provides a basis for
interpreting the operating consequences of
renewable forecasts.

\subsection{Sensitivity to Smoothing Radius and Perturbation Budget}
\label{subsec:sigma_k}

The Gaussian smoothing radius $\sigma$ and
perturbation budget $K$ govern the estimation
of decision feedback in \eqref{eq:one_sided_estimator}.
The smoothing radius determines the scale of
renewable forecast perturbations used to evaluate
realized BESS cost, while $K$ controls the number
of perturbation directions used for each gradient
estimate. Their effects are evaluated by varying
one parameter at a time around the nominal setting
$(\sigma_0,K)=(1.3483\ \mathrm{MW},4)$. Unlike the BESS operating-limit tests in
Section~\ref{subsec:robustness}, these experiments
retain the same physical BESS constraints
and realized-cost evaluation.
They examine the influence of gradient-estimation
settings on the decision feedback used for
renewable predictor training.

\begin{figure}[t]
\centering
\includegraphics[width=0.8\columnwidth]
{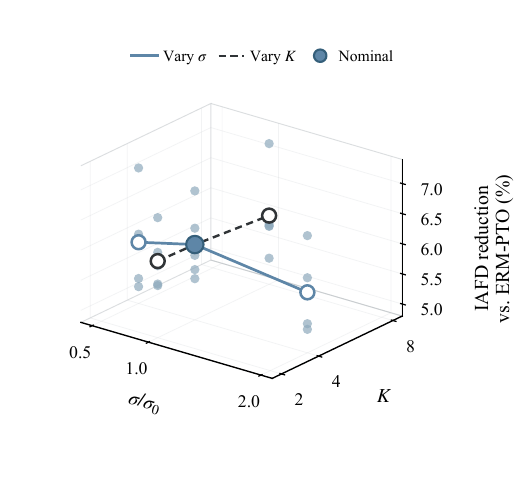}
\vspace{-3em}
\caption{Sensitivity of FC-DFL to the Gaussian smoothing radius and perturbation budget.}
\label{fig:sigma_k_sensitivity}
\end{figure}

With $K=4$, the mean IAFD remains between
0.09529 and 0.09554 Hz$\cdot$h as
$\sigma/\sigma_0$ varies from 0.5 to 2.0.
The nominal smoothing radius achieves the
lowest decision regret of 97.66 among these
three settings, compared with 98.84 and 98.90
at the smaller and larger radii, respectively.
The small differences indicate that the
realized regulation performance is relatively
insensitive to the smoothing radius within
the examined range. The influence of the perturbation budget is
examined at $\sigma=\sigma_0$.
Increasing $K$ from 2 to 8 does not produce
a consistent improvement in regulation performance.
As reported in Table~\ref{tab:sigma_k},
the mean IAFD varies from 0.09527 to
0.09540 Hz$\cdot$h, while the offline
training time increases from 135.82 to
332.76 s. Thus, a larger perturbation budget
increases computational effort without a
corresponding reduction in realized IAFD
under the tested settings.

\begin{table}[t]
\centering
\caption{Sensitivity of FC-DFL to Smoothing Radius and Perturbation Budget}
\label{tab:sigma_k}
\renewcommand{\arraystretch}{1.12}
\setlength{\tabcolsep}{5pt}
\begin{tabular}{ccccc}
\hline \hline
$\sigma/\sigma_0$
& $K$
& IAFD (Hz$\cdot$h)
& Decision regret 
& Training time \\
\hline
0.5 & 4 & 0.0955 & 98.84 & 243.10 s \\
1.0 & 2 & 0.0953 & 97.56 & 135.82 s \\
1.0 & 4 & 0.0953 & 97.66 & 195.79 s \\
1.0 & 8 & 0.0954 & 98.22 & 332.76 s \\
2.0 & 4 & 0.0955 & 98.90 & 126.72 s \\
\hline \hline
\end{tabular}
\end{table}

Across all five configurations, FC-DFL maintains
an IAFD reduction of 5.80\%--6.07\% relative
to ERM-PTO, as illustrated in
Fig.~\ref{fig:sigma_k_sensitivity}.
These results demonstrate that the regulation
benefits of FC-DFL are maintained under
different smoothing radii and perturbation
budgets, supporting the numerical robustness
of the proposed decision-feedback training.


\section{Conclusion}
\label{sec:conclusion}

This paper developed Frequency-Calibrated Decision-Focused Learning (FC-DFL) to align renewable forecasting with the operational objectives of constrained BESS dispatch. FC-DFL leverages feedback from the realized outcomes of forecast-driven BESS dispatch to guide predictor training, enabling forecasts to be optimized for their downstream operational results. Building on the same decision feedback, we further developed a decision-oriented XAI method that combines it with prediction sensitivity to identify the forecasting inputs and periods most relevant to BESS dispatch.
Case studies using German wind and solar data demonstrated that FC-DFL reduced the integrated absolute frequency-deviation metric by 6.05\% and decision regret by 23.92\% compared with accuracy-based predict-then-optimize approach. BESS energy throughput was also reduced by 61.17\%. The XAI analysis further revealed that high-ranked forecast periods produced mean absolute realized decision-cost changes 2.84 times those of low-ranked periods, supporting the operational relevance of the proposed importance rankings.
These findings highlight the importance of learning from downstream decision-making results in renewable forecasting. The proposed method can support both predictor training and identify the forecasting information that matters more to BESS operation.  Future work will extend FC-DFL to network-constrained BESS operation and dynamic frequency-response models.

\bibliographystyle{IEEEtran}
\bibliography{references}

\end{document}